\newtheorem{theorem}{Theorem}
\newtheorem{lemma}{Lemma}
\theoremstyle{definition}
\newtheorem{definition}{Definition}
\author{Svein Høgemo}
\title{Linear MIM-width of the Square of Trees}
\begin{document}

\maketitle

\begin{abstract}
Graph parameters measure the amount of structure (or lack thereof) in a graph that makes it amenable to being decomposed in a way that facilitates dynamic programming. Graph decompositions and their associated parameters are important both in practice (as a tool for designing robust algorithms for NP-hard problems) and in theory (relating large classes of problems to the graphs on which they are solvable in polynomial time).

Linear MIM-width is a variant of the graph parameter MIM-width, introduced by Vatshelle. MIM-width is a parameter that is constant for many classes of graphs. Most graph classes which have been shown to have constant MIM-width also have constant linear MIM-width. However, computing the (linear) MIM-width of graphs, or showing that it is hard, has proven to be a huge challenge. To date, the only graph class with unbounded linear MIM-width, whose linear MIM-width can be computed in polynomial time, is the trees. In this follow-up, we show that for any tree $T$ with linear MIM-width $k$, the linear MIM-width of its square $T^2$ always lies between $k$ and $2k$, and that these bounds are tight for all $k$.
\end{abstract}

\section{Introduction}

Divide-and-conquer is a tried and true strategy for obtaining fast algorithms. In addition to its classical use for polynomial-time algorithms, it is also a valuable tool for designing parameterized algorithms for NP-hard problems. For problems on graphs, this strategy often manifests in the form of \emph{graph decompositions}. 

A graph decomposition is a tree structure that relates to the graph in such a way that one can solve any of a number of problems on the graph by executing dynamical programming on the decomposition. However, not all graphs will be equally suitable for a given decomposition (nor should we expect them to be, as we assume NP-hard problems are hard to solve on some instances). The \emph{width} of a decomposition is a number $k$ that measures how badly a graph fits the decomposition -- the lower the $k$, the better the structure of the graph fits the particular decomposition technique. The goal is to solve the problem in time polynomial in the size of the graph, but exponential in $k$. The arguably most famous decomposition technique, \emph{tree decomposition}, and its associated parameter \emph{treewidth} works well on many classes of sparse graphs -- trees have treewidth 1, and in general graphs with low treewidth have a ``tree-like'' structure. Due to the large role treewidth has played in structural graph theory, and its usefulness in designing algorithms for a plethora of problems~\cite{de2006exploiting,madani2017finding,maniu2019experimental,ordyniak2013parameterized}, the problem of computing tree decompositions of decent treewidth has received great attention, and several algorithms, both exact and approximating, have been proposed, with varying usability in practice~\cite{bodlaender1993linear,bodlaender2010treewidth,bodlaender2011treewidth,korhonen2022single}. The different uses of treewidth are well explained in Bodlaender's series of surveys on the matter~\cite{bodlaender1997treewidth,bodlaender2006treewidth,bodlaender2007treewidth,koster2001treewidth}.

Since tree decomposition only works well for sparse graphs, several other decomposition techniques have been proposed. One quite versatile technique is \emph{branch decomposition}, where the vertices of the graph are mapped to the leaves of a tree of maximum degree 3. Each edge in the tree corresponds to a cut of the graph (defined by which side of the edge the leaf corresponding to each vertex lies on), and there have been defined several width measures on these decompositions, based on functions on the cuts. One such width measure that has garnered significant theoretical interest is \emph{MIM-width}. ``MIM'' stands for ``maximum induced matching'', and MIM-width measures the the biggest induced matching in any of the bipartite graphs induced by the cuts defined by the decomposition. MIM-width was introduced by Vatshelle~\cite{vatshelle2012new} where its strongness and algorithmic properties were expounded.

The significance of MIM-width lies in how veritably strong it is. Several important graph classes have constant MIM-width, among them interval graphs and circular arc graphs, permutation graphs, complements of graphs with constant degeneracy, graphs with constant treewidth~\cite{vatshelle2012new}, and powers of graphs with constant MIM-width~\cite{jaffke2019mim}. The trade-off is a worse dependence on the parameter than in most other types of decomposition -- typical running times of algorithms parameterized by MIM-width are $n^{O(k)}$ or $n^{O(k^2)}$ for graphs with MIM-width $k$, given a decomposition (see e.g. the algorithms given in \cite{bergougnoux2021more,jaffke2018unified,jaffke2020mim}). In other words, for any constant $k$ a polynomial time algorithm exists, but the larger $k$ is, the higher the degree of the polynomial is (this is called an $XP$ algorithm). Still, obtaining an algorithm parameterized by MIM-width means that the problem is solvable in polynomial time for many graph classes.

For all width measures of branch decompositions, one can define a ``linear'' variant, where the allowed decompositions are restricted to linear layouts of the graph. Linear graph parameters are interesting, both as a sort of test-bed for proving things about the decompositions (as they are easier to reason about), but also because algorithms that work on linear layouts are faster in practice than algorithms that work on trees. Furthermore, MIM-width and linear MIM-width have the peculiar relationship that most graph classes that have constant MIM-width also have constant linear MIM-width. In fact, of the graph classes that have unbounded treewidth or clique-width, only the class of leaf powers and some of its subclasses, such as the trees themselves, are known to have bounded MIM-width and unbounded linear MIM-width~\cite{jaffke2020bounded}.

Without doubt, the greatest challenge regarding (linear) MIM-width is the problem of computing it on arbitrary graphs. Several hardness results exist: In \cite{saether2016hardness}, it was proved that it is at least as hard to compute the (linear) MIM-width of a graph, or an optimal branch decomposition, as it is to compute the MIM of a graph. Maximal induced matching is itself a hard problem in several ways; in addition to being NP-complete, it is hard to approximate to a constant factor in polynomial time~\cite{elbassioni2009approximability}, and also it is hard for the parameterized complexity class $W[1]$, meaning that we likely will never find any algorithm that decides whether an arbitrary graph has a maximal induced matching of size at least $k$ in time bounded by $f(k)\cdot n^{O(1)}$~\cite{moser2009parameterized}. All of these barriers carry over to the problem of computing (linear) MIM-width. This is coupled with a lack of positive results: To date, no algorithm has been found that in polynomial time decides whether an arbitrary graph has (linear) MIM-width at most $k$, for any constant $k$. Even for (linear) MIM-width 1, no algorithm has been found that in polynomial time outputs a decomposition of \emph{any} constant width, or concludes that the graph has width $>1$. It is not implausible that recognizing graphs with (linear) MIM-width $k$ is NP-complete for some (small) $k$, especially in light of a recent, similar result regarding twin-width, another strong parameter~\cite{berge2022deciding}. If this turns out to be true, it would naturally be a huge disadvantage, although not taking away from its proven usefulness for the many graph classes with bounded MIM-width.

Regarding positive results, for all graph classes with proven bounded (linear) MIM-width, there is an easy way of finding a good layout or decomposition; for example, an optimal layout of an interval graph is found by ordering the vertices in order of the left endpoints of their respective intervals~\cite{belmonte2013graph,vatshelle2012new}. Beware that for graph classes with a higher bound than 1, the decompositions may not be optimal, they are only guaranteed to be bounded. To date, the only polynomial-time algorithm for computing (linear) MIM-width on a class of unbounded width is the one given in \cite{hogemo2019linear} for the linear MIM-width of trees (the MIM-width of trees is 1). Om the class of trees, the linear MIM-width is at most logarithmic and within a constant factor of their pathwidth (the linear variant of treewidth).

This discrepancy between the use of MIM-width in designing robust graph algorithms and how little we know about its computability, makes it hard to gauge the exact potential of this parameter. Studying the linear MIM-width of simple graph classes, like the squares of trees, can help illuminating what makes the problem difficult to solve in the general case, and find out when the structural tools we have for analysing trees break down. Indeed, if deciding (linear) MIM-width $\leq k$ for some constant $k$ actually proves NP-complete, it will likely be proved by restricting the problem to some graph class on which the problem is also NP-complete. This is not unusual, as restricting the problem to a special case makes reductions easier to find (see e.g. the classic result for the the NP-completeness of chordal completion~\cite{yannakakis1981computing}).

Following the result in \cite{jaffke2019mim}, any power of a graph $G^k$ has at most twice the (linear) MIM-width of the original graph $G$. Therefore powers of trees have at most logarithmic linear MIM-width. This is in stark contrast to e.g. pathwidth, since the square of a star is a complete graph and therefore has linear pathwidth. Furthermore, for any graph $G$, $G^{diam(G)}$ is a complete graph and therefore has linear MIM-width 1. Squares of trees are a simple graph class that nevertheless have more going on than trees themselves, and therefore serves as a natural class to extend the research from \cite{hogemo2019linear} in.

The rest of the paper is organized as follows: Section 2 introduces the notation necessary to follow the paper; Section 3 contains the main result and the previous results that we use in order to prove the main result; and finally, Section 4 concludes with a short discussion on its implications (if any).

\section{Preliminaries}

All graphs considered are finite and simple. $V(G)$ and $E(G)$ denotes the sets of vertices and edges in the graph $G$, respectively.

$N_G(v)$ denotes the \emph{open neighborhood} of the vertex $v$ in the graph $G$, i.e. the set of vertices with which $v$ shares an edge. $N_G[v]$ denotes the \emph{closed neighborhood}, i.e. $N(v)\cup\{v\}$. For a set of vertices $S\subseteq V(G)$, $N[S] = \bigcup_{v\in S}N[v]$, and $N(S) = N[S]\setminus S$. If $G$ is obvious from context, subscripts can be omitted.

Let $S,T$ be two disjoint subsets of $V(G)$. $G[S,T]$ is the bipartite graph induced by $S$ and $T$, i.e. the subgraph consisting of all edges with one endpoint in $S$ and the other in $T$.

If $T$ is a rooted tree, the notation $T[v]$ for some node $v$ refers to the subtree rooted in $v$, i.e. the rooted tree consisting of $v$ (as root) and all its descendants.

The \emph{distance} between two vertices $u,v\in V(G)$ is the length of a shortest path between $u$ and $v$. Given two subgraphs $H,H'\subseteq G$, the distance between $H$ and $H'$ is the minimum distance between any vertex in $H$ and any vertex in $H'$. For example, the distance between $H$ and $H'$ is at least 1 iff $H$ and $H'$ are disjoint. The \emph{diameter} of $G$, $diam(G)$, is the length of the longest distance between any two vertices in $G$.

\begin{definition}[graph power]
Given a graph $G$ and some integer $k\geq 1$, the $k$-th \emph{power} of $G$, denoted $G^k$, is
the unique graph that has $V(G^k) = V(G)$,
and the additional property that for any two vertices $u,v$, $u$ and $v$ are
adjacent in $G^k$ if and only if they have distance at most $k$ in $G$. If we
consider every vertex in $G$ ``adjacent to'' itself, we have that $G^1 = G$, and furthermore, given the adjacency matrix of $G$ as a boolean
matrix $\mathcal{A}$, the adjacency matrix of $G^k$ is given by $\mathcal{A}^k$.

We define the \emph{square} of $G$ as $G^2$, the second power of $G$.
\end{definition}

\begin{definition}[linear layout]
A \emph{linear layout} of a graph $G$ is a bijection $\sigma:V(G)\rightarrow \{1,\ldots,|V(G)|\}$, i.e. a total order on the vertices of $G$. We will routinely use $v_i$ as a shorthand for $\sigma^{-1}(i)$ when $G$ and $\sigma$ is given.
\end{definition}

The next three definitions are found in \cite{hogemo2019linear}:

\begin{definition}[maximum induced matching, mim-width of a layout]
For a graph $G$ on $n$ vertices, we denote by $\mathrm{mim}(G)$ the size of its \emph{maximum induced matching} (MIM), the largest number of edges whose endpoints induce a matching. Let $\sigma$ be a linear layout of $G$. For any index $1\leq i< n$ we have a subset of $V(G)$, $V^\sigma_i = \{v_1,\ldots,v_i\}$. We call the partition $(V^\sigma_i,\overline{V^\sigma_i})$ a \emph{cut} of $G$. The \emph{maximum induced
matching width}, or \emph{MIM-width} of G under layout $\sigma$ is denoted $\mathrm{mw}(\sigma, G)$, and is defined as the maximum, over all $1\leq i< n$, of $mim(G[V^\sigma_i,\overline{V^\sigma_i}])$.
\end{definition}

\begin{definition}[linear MIM-width]
The \emph{linear maximum induced matching width} -- \emph{linear MIM-width} -- of $G$ is denoted $\mathrm{lmw}(G)$, and is the minimum value of $\mathrm{mw}(\sigma,G)$ over any possible layout $\sigma$ of the vertices of $G$.
\end{definition}

Note: When talking about a cut $(V^\sigma_i,\overline{V^\sigma_i})$ in a graph $G$, a vertex $v$ is said to lie \emph{to the left of} the cut if and only if $v \in V^\sigma_i$; otherwise,
$v$ is said to lie \emph{to the right of} the cut.

\begin{definition}[$k$-neighbor]
Let $x$ be a node in the tree $T$ and $v$ a neighbor of $x$. If $v$ has a neighbor $u \neq x$ such that the component of $T\setminus vu$ containing $u$ has linear MIM-width at least $k$, then we call $v$ a $k$-\emph{neighbor} of $x$.
\end{definition}

\section{Results}

We repeat two structural results from \cite{hogemo2019linear} here:

\begin{lemma}[path layout lemma (\cite{hogemo2019linear}, Lemma 1)]\label{pathlayout}
Let $T$ be a tree. If there exists a path $P = x_1,\ldots,x_p)$ in $T$ such that every connected component of $T\setminus N[P]$ has linear MIM-width $\leq k$, then $lmw(T) \leq k+1$.
\end{lemma}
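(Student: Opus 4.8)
The plan is to exhibit a single linear layout $\sigma$ of $T$ every cut of which has $mim$ at most $k+1$. First I would record the structure of $T$ around $P$. Since $T$ is a tree, acyclicity forces each vertex of $N(P)\setminus P$ to be adjacent to exactly one path vertex, no two such ``attached'' vertices to be mutually adjacent, and each connected component $C$ of $T\setminus N[P]$ to be joined to the rest of $T$ by a single edge whose endpoint outside $C$ is necessarily one of the attached vertices (it cannot be a path vertex, since then $C$ would meet $N(P)$). Thus $T$ decomposes into a skeleton -- the path together with its attached vertices -- with the low-width components dangling from the attached vertices.

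Next I would build $\sigma$ as a concatenation of blocks $B_1,\ldots,B_p$, one per path vertex. In $B_j$ I list, for each attached vertex $u$ of $x_j$ in turn, first the vertices of the components hanging off $u$ in their own optimal layouts (each of width $\leq k$) and then $u$ itself, placing $x_j$ at the very end of $B_j$. The purpose of keeping each component contiguous is that any cut of $\sigma$ splits \emph{at most one} component across its two sides; every other component lies wholly on one side.

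Then I would analyze an arbitrary cut $(L,R)$. Cuts at a block boundary cross only a single path edge, so I focus on a cut interior to some $B_j$, say inside the sub-block of the attached vertex $u_i$. Walking through the four edge types of $T$ (path edges, attachment edges $x_j u$, root edges $u$--$\mathrm{root}(C)$, and internal component edges), I would verify that every crossing edge is either (i) internal to the unique split component $C^{*}$, (ii) incident to $x_j$, or (iii) incident to $u_i$. The crossing edges internal to $C^{*}$ restrict to an induced matching of the cut inside $C^{*}$'s own layout, and hence number at most $k$.

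The main obstacle is that the remaining skeleton edges appear to contribute two more (one edge at $x_j$, one at $u_i$), which would only yield $k+2$. The resolution, and the crux of the proof, is that $x_j$ and $u_i$ are adjacent in $T$: no induced matching can simultaneously use an edge incident to $x_j$ and an edge incident to $u_i$, since then the edge $x_j u_i$ would appear among the endpoints, violating inducedness. Hence these skeleton edges contribute at most $1$, giving $mim(T[L,R])\leq k+1$ overall, so $lmw(T)\leq k+1$. I would close by checking that the degenerate situations ($j=1$, blocks with no attached vertices, and empty sides) only reduce the set of crossing edges and so cause no trouble.
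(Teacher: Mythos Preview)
Your overall architecture is right, but the crucial step fails because of where you place the path vertex. In your layout $x_j$ sits at the \emph{end} of its block, so for a cut interior to $B_j$ both $x_j$ and the current attached vertex $u_i$ lie on the \emph{right}. The edge $x_j u_i$ therefore has both endpoints in $R$ and is \emph{not} an edge of the bipartite graph $T[L,R]$; consequently it does \emph{not} obstruct having one matching edge at $x_j$ and another at $u_i$. Inducedness of the matching is tested in $T[L,R]$, not in $T$, and that is exactly where your ``$x_j$ and $u_i$ are adjacent'' argument collapses.

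A concrete counterexample: take the path $c_1\!-\!u_1\!-\!x\!-\!u_2\!-\!c_2$ with $P=(x)$, so $\{c_1\}$ and $\{c_2\}$ are the components (each of linear MIM-width $0$). Your layout is $c_1,u_1,c_2,u_2,x$. At the cut after $c_2$ we have $L=\{c_1,u_1,c_2\}$, $R=\{u_2,x\}$, and the crossing edges $u_1x$ and $c_2u_2$ form an induced matching of size $2$ in $T[L,R]$; your layout thus has MIM-width $2$, not $1$.

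The fix is simple and is precisely what the paper's construction does (see the layout $(u)\oplus\sigma_{S_1^2}\oplus(v_1)\oplus\cdots$ in the proof of Theorem~1, which the paper says ``is identical to the one given in the proof of the Path Layout Lemma''): place $x_j$ \emph{before} the sub-blocks of its attached vertices. Then for any cut inside $B_j$ one has $x_j\in L$ and $u_i\in R$, so $x_j u_i$ is a genuine crossing edge, and your inducedness argument goes through verbatim to give $|M_2|\le 1$ and hence $\mathrm{mim}(T[L,R])\le k+1$. With that single change in the ordering, your proof is correct and matches the intended argument.
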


\begin{lemma}[\cite{hogemo2019linear}, Theorem 1]\label{kneighbors}
Let $T$ be a tree. $lmw(T) \geq k+1$ if and only if there is a node $x\in T$ that has at least three $k$-neighbors.
\end{lemma}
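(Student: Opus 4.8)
The plan is to prove the two implications separately, establishing the ``if'' (lower bound) direction directly and the ``only if'' direction by contraposition using the path layout lemma (Lemma~\ref{pathlayout}). Before either, I would record the elementary fact that $\mathrm{lmw}$ is monotone under taking subtrees: if $H$ is a subtree of $T$, then restricting any layout $\sigma$ of $T$ to $V(H)$ yields a layout of $H$ whose cuts are induced sub-bipartite-graphs of the corresponding cuts of $\sigma$, so every induced matching of a cut of $H$ survives in $T$; hence $\mathrm{lmw}(H)\le\mathrm{lmw}(T)$. This lets me transfer the hypothesis ``$\mathrm{lmw}(C_i)\ge k$'' into statements about the induced layout.

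For the lower bound, suppose $x$ has three $k$-neighbors $v_1,v_2,v_3$, witnessed by edges $v_iu_i$ (with $u_i\ne x$) whose far components $C_i$ satisfy $\mathrm{lmw}(C_i)\ge k$; note the $C_i$ lie in three distinct components $B_1,B_2,B_3$ of $T\setminus\{x\}$. Fix any layout $\sigma$. Since $\mathrm{lmw}(C_i)\ge k$, the layout induced on $C_i$ has a cut, occurring at some global threshold $t_i$, carrying an induced matching $M_i$ of size $\ge k$; by monotonicity the global cut at $t_i$ already realizes $M_i$. To upgrade to size $k+1$ I would exhibit one extra crossing edge: if at threshold $t_i$ some other branch $B_j$ has a vertex on the opposite side of the cut from $x$, then the $x$-to-that-vertex path (which stays inside $\{x\}\cup B_j$) must contain a crossing edge $e^\ast$, and since $B_j$ and $C_i$ are separated by $x$, no endpoint of $e^\ast$ is adjacent to any endpoint of $M_i$, so $M_i\cup\{e^\ast\}$ is an induced matching of size $\ge k+1$. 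The only obstruction is the configuration in which, for every $i$, both other branches lie entirely on $x$'s side of $t_i$; writing $a_i$ and $b_i$ for the extreme positions of $C_i$ and noting each $C_i$ straddles $t_i$, a short pigeonhole argument on how $x$ sits relative to the three thresholds shows this over-constrains the $t_i$ (e.g. it forces $b_2\le t_1<b_1\le t_2<b_2$), a contradiction. Hence some cut has MIM $\ge k+1$ and $\mathrm{lmw}(T)\ge k+1$.

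For the converse I argue the contrapositive: assuming no vertex has three $k$-neighbors, I construct a spine $P$ and apply Lemma~\ref{pathlayout}. I would take $P$ to be a path that follows the $k$-neighbor (heavy) directions, chosen maximal. The key observation is that any leftover heavy branch is forbidden: if a component $D$ of $T\setminus N[P]$ had $\mathrm{lmw}(D)\ge k$, then the vertex $a\in N(P)\setminus P$ joining $D$ to its attachment point $p\in P$ would be a $k$-neighbor of $p$ (monotonicity makes the relevant far component of $T\setminus au$ have $\mathrm{lmw}\ge k$), and together with the two heavy spine directions at $p$ this would give $p$ three $k$-neighbors, contradicting the hypothesis. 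Therefore every component of $T\setminus N[P]$ has $\mathrm{lmw}\le k-1$, and Lemma~\ref{pathlayout} gives $\mathrm{lmw}(T)\le(k-1)+1=k$.

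The main obstacle is making the spine construction airtight. The clean content of the hypothesis ``no vertex has three $k$-neighbors'' is that the heavy part of $T$ is essentially a path, but turning this into a single path $P$ whose interior vertices genuinely carry two heavy spine directions, and whose endpoints can be chosen to hang no heavy branch off $P$ (so the endpoint count is not spuriously short), requires a careful maximal/extremal choice of $P$; this is the delicate step, with the dual pigeonhole casework of the lower bound being merely tedious by comparison.
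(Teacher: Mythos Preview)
The paper does not prove this lemma; it is quoted from \cite{hogemo2019linear} without proof, so there is no in-paper argument to compare against. The paper does, however, prove Lemma~\ref{threeparts}, which is stated as a generalisation of the ``if'' direction, and your lower-bound argument is precisely the tree-specialisation of that proof: three far-apart subtrees of $\mathrm{lmw}\ge k$, plus a connecting path that avoids the neighbourhood of the third, force a cut of induced-matching size $\ge k+1$. Your direct phrasing (pigeonhole on which side of $t_i$ the vertex $x$ lies) is correct, even if the displayed chain ``$b_2\le t_1<b_1\le t_2<b_2$'' is slightly garbled; the contradiction is simply that if $x$ lies on the same side of $t_1$ and $t_2$, then in the obstruction configuration $B_2$ lies entirely on that side of $t_1$ while $C_2\subseteq B_2$ straddles $t_2$, forcing $t_1<t_2$, and symmetrically $t_2<t_1$.

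For the converse your strategy---contraposition followed by Lemma~\ref{pathlayout}---is the intended one, and you have put your finger on the genuine difficulty. The sentence ``together with the two heavy spine directions at $p$'' is the crux, and it is \emph{not} automatic. If $P$ is built by greedily walking to $k$-neighbours, then by construction $x_{i+1}$ is a $k$-neighbour of $x_i$, but the $k$-neighbour relation is not symmetric: knowing that $x_i$ is a $k$-neighbour of $x_{i-1}$ says that $x_i$ has some neighbour $u\neq x_{i-1}$ with a heavy far component, which tells you nothing about whether $x_{i-1}$ has such a neighbour $u'\neq x_i$. So your claimed third $k$-neighbour at $p$ may fail to materialise on one spine side. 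You explicitly leave this ``delicate step'' open, so the proposal is a correct plan with an unfinished upper-bound half; closing it requires the careful extremal construction of $P$ carried out in \cite{hogemo2019linear}, which the present paper does not reproduce.
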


Lemma \ref{kneighbors} will be our main tool for recursively generating trees with a certain linear MIM-width. The next lemma is a generalization of its backwards direction:

\begin{lemma} [due to Vågset~\cite{vagsetmimwidth}]\label{threeparts}
Let $G$ be a graph, and let $C_1$, $C_2$ and $C_3$ be connected induced subgraphs of $G$ with pairwise distance at least two. Let $k$ be the minimum linear MIM-width of these three subgraphs. If, for each pair of subgraphs $C_i,C_j$ there exists a path $P_{i,j}$ that runs from $C_i$ to $C_j$ without intersecting the closed neighborhood of the third subgraph, then the linear MIM-width of $G$ is strictly greater than $k$.
\end{lemma}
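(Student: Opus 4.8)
The plan is to show that \emph{every} linear layout $\sigma$ of $G$ has a cut whose bipartite graph contains an induced matching of size $k+1$; since $\sigma$ is arbitrary this gives $lmw(G)\geq k+1$. Write $V_t=\{v:\sigma(v)\leq t\}$ for the left side of the cut at position $t$, and record one elementary observation to be used repeatedly: because each $C_i$ is an \emph{induced} subgraph, any set of matching edges that lives inside a single $C_i$, is induced in $C_i$, and crosses a cut of $G$ is automatically an induced matching of that cut in $G$. First I would localize three matchings. For each $i$ the restriction $\sigma|_{C_i}$ is a layout of $C_i$, so since $lmw(C_i)\geq k$ there is a global position $t_i$ such that the cut it induces on $C_i$ carries an induced matching $M_i\subseteq E(C_i)$ of size at least $k$. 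Relabel so that $t_1\leq t_2\leq t_3$. The cut I will actually use is the middle one, $t_2$: the matching $M_2$ crosses it and, being contained in the induced subgraph $C_2$, is already an induced matching of size $\geq k$ in $G[V_{t_2},\overline{V_{t_2}}]$. It remains to enlarge $M_2$ by a single edge that is non-adjacent to all of it.

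The extra edge should come from the two outer parts joined through the connecting path. Assume $k\geq 1$ (the case $k=0$ is immediate, since $G$ has an edge and hence $lmw(G)\geq 1$); then each $M_i$ is nonempty. Consequently $C_1$ contains a vertex $p$ with $\sigma(p)\leq t_1\leq t_2$, i.e. to the left of the $t_2$-cut, and $C_3$ contains a vertex $s$ with $\sigma(s)>t_3\geq t_2$, i.e. to its right. Now I would concatenate a path inside the connected graph $C_1$ from $p$ to the $C_1$-endpoint of $P_{1,3}$, then the path $P_{1,3}$ itself, then a path inside the connected graph $C_3$ from the $C_3$-endpoint of $P_{1,3}$ to $s$. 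This is a walk from $p$ to $s$ all of whose vertices lie in $C_1\cup V(P_{1,3})\cup C_3$. Since $p$ lies left of the $t_2$-cut and $s$ lies right of it, some consecutive pair $x,y$ on the walk is an edge $xy\in E(G)$ crossing the cut.

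To finish I would verify that $xy$ is compatible with $M_2$. Every vertex of the walk avoids $N[C_2]$: the vertices in $C_1$ and in $C_3$ are non-adjacent to $C_2$ because the pairwise distances are at least two, and the vertices of $P_{1,3}$ avoid $N[C_2]$ by the hypothesis on that path. In particular $x,y\notin C_2$, so they are distinct from and non-adjacent to every endpoint of $M_2\subseteq C_2$. Hence $M_2\cup\{xy\}$ is an induced matching of size $\geq k+1$ in $G[V_{t_2},\overline{V_{t_2}}]$, giving $\mathrm{mw}(\sigma,G)\geq k+1$. As $\sigma$ was arbitrary, $lmw(G)>k$.

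The hard part is the bookkeeping in the middle step, not any deep idea: one must guarantee that a \emph{single} cut simultaneously realizes a size-$k$ matching inside one part and admits a clean crossing edge. The ordering trick $t_1\leq t_2\leq t_3$ is exactly what forces the middle part to supply $M_2$ while the two outer parts, linked through the path avoiding the middle part's neighborhood, supply the crossing edge. I expect the delicate point to be checking that this crossing edge is non-adjacent to \emph{all} of $M_2$, which is precisely where both hypotheses -- pairwise distance at least two and the three avoiding paths -- are consumed; I would therefore be careful to track which of $C_1,C_2,C_3$ plays the ``middle'' role for a given $\sigma$, since that choice dictates which avoiding path (here $P_{1,3}$) must be invoked.
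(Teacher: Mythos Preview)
Your proof is correct and follows essentially the same approach as the paper's: both localize a size-$k$ induced matching inside each $C_i$, order the three cut positions, and at the middle cut augment $M_2$ by one crossing edge obtained from the $P_{1,3}$ route, which is independent of $M_2$ thanks to the distance-two and neighborhood-avoidance hypotheses. The only cosmetic difference is that the paper argues by contradiction that \emph{all} of $C_1$ (resp.\ $C_3$) lies to one side of the middle cut before invoking $P_{1,3}$, whereas you instead extend the walk through $C_1$ and $C_3$ to reach the specific witness vertices $p$ and $s$; both devices yield the required crossing edge outside $N[C_2]$.
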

\begin{proof}
To prove this lemma, we assume towards a contradiction that there exists a linear layout $\sigma$ of $G$ with MIM-width $k$.

By definition of linear MIM-width, $\sigma$ must contain three cuts $V^\sigma_{i_1}$, $V^\sigma_{i_2}$ and $V^\sigma_{i_3}$, such that $G[V^\sigma_{i_1},\overline{V^\sigma_{i_1}}]$ contains an induced matching $M_1$ with $k$ edges from
$C_1$; likewise there exists an $M_2$ of size $k$ in
$C_2$, and $M_3$ in $C_3$. Since the subgraphs have distance at least 2, any edge from (say) $C_2$ can increase the size of the matchings $M_1$ or $M_3$. Assuming that $(G,\sigma)$ has MIM-width $k$, it must thus be the case that either, for every vertex $x\in C_2$, $\sigma(x) \leq i_1$, or that, for every vertex $x\in C_2$, $\sigma(x) > i_1$. These facts are obviously also true of every other pair of subgraphs.

Now we assume w.l.o.g. that $i_1 < i_2 < i_3$. This implies that some vertex in $C_1$ lies to the left of the cut $V^\sigma_{i_2}$ and that some vertex in $C_3$ lies to the right of the cut. From the previous fact, we can directly infer that \emph{all} vertices of $C_1$ (resp. $C_3$) lie to the left (resp. right) of $V^\sigma_{i_2}$. This means that some edge $e$ on the path $P_{1,3}$ must cross the cut. But the path has also distance at least 2 to $C_2$, thus $e$ can be taken into $M_2$; this implies that $mw(G,\sigma) \geq k+1$. By contradiction, the above lemma is true.\qed
\end{proof}

The last lemma we will use is a special case of Theorem 5 of \cite{jaffke2019mim}, stated in terms of linear layouts; in the original paper this result is given in terms of branch decompositions.

\begin{lemma}
Given a graph $G$ and a layout $\sigma$ such that $G$ has MIM-width $k$ under $\sigma$, then for any power $G^m$ of $G$, $G^m$ has MIM-width at most
$2k$ under $\sigma$. Therefore, $lmw(G^m) \leq 2\cdot
lmw(G)$.
\end{lemma}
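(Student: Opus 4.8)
The plan is to prove the stronger, cut-local statement that for every cut $(V^\sigma_i,\overline{V^\sigma_i})$ of the layout one has $\mathrm{mim}(G^m[V^\sigma_i,\overline{V^\sigma_i}]) \le 2\,\mathrm{mim}(G[V^\sigma_i,\overline{V^\sigma_i}])$. Since $\sigma$ induces exactly the same family of cuts in $G$ and in $G^m$, taking the maximum over all $i$ then yields $\mathrm{mw}(\sigma,G^m)\le 2\,\mathrm{mw}(\sigma,G)=2k$, and minimising over layouts gives the final inequality $lmw(G^m)\le 2\,lmw(G)$. From now on I fix one cut and write $(A,B)$ for its two sides, with $A=V^\sigma_i$.

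First I would fix a maximum induced matching $\{x_1y_1,\ldots,x_ty_t\}$ of $G^m[A,B]$, with each $x_j\in A$ and $y_j\in B$, and for each $j$ choose a shortest $x_j$--$y_j$ path $Q_j$ in $G$. Two facts are then in play: $\mathrm{dist}_G(x_j,y_j)\le m$ (so $Q_j$ has length at most $m$), while the induced-matching condition in $G^m$ gives $\mathrm{dist}_G(x_j,y_\ell)>m$ for all $j\neq\ell$. Because $x_j\in A$ and $y_j\in B$, the path $Q_j$ must contain at least one edge $u_jv_j$ crossing the cut, with $u_j\in A$ and $v_j\in B$; this is the candidate edge I would contribute to an induced matching of $G[A,B]$. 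Passing to crossing edges is what converts the (very weak) non-adjacency in $G^m$ into genuine non-edges in $G$: writing $\alpha_j=\mathrm{dist}_G(x_j,u_j)$ and $\beta_j=\mathrm{dist}_G(v_j,y_j)$, so that $\alpha_j+\beta_j\le m-1$ since these are subpaths of a shortest path, the triangle inequality gives $\mathrm{dist}_G(u_j,v_\ell)\ge \mathrm{dist}_G(x_j,y_\ell)-\alpha_j-\beta_\ell> m-\alpha_j-\beta_\ell$, so that $u_jv_\ell\notin E(G)$ whenever $\alpha_j+\beta_\ell\le m-1$.

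The heart of the argument, and the step I expect to be the main obstacle, is to split $\{1,\ldots,t\}$ into \emph{two} classes so that inside each class every pair $j,\ell$ satisfies both $\alpha_j+\beta_\ell\le m-1$ and $\alpha_\ell+\beta_j\le m-1$; then the crossing edges of a class form an induced matching of $G[A,B]$ (one also checks that the chosen crossing edges of a class are pairwise vertex-disjoint, which again follows from the distance gap), so each class has at most $\mathrm{mim}(G[A,B])$ edges and $t\le 2\,\mathrm{mim}(G[A,B])$. For the case $m=2$ that is actually needed for squares this partition is transparent: a length-$2$ path has a single midpoint $w_j$, and I would put $j$ in the first class if $w_j\in A$ and in the second if $w_j\in B$ (length-$1$ paths go to either). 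In the first class every crossing edge has the form $w_jy_j$ with $\beta_j=0$ and $\alpha_j\le1$, so $\mathrm{dist}_G(w_j,y_\ell)\ge 3-1=2$ and the edges are pairwise non-adjacent; the second class is symmetric. The difficulty in pushing this to general $m$ is that a crossing edge can be forced far from both endpoints of its path, so the clean ``keep one endpoint exact'' trick breaks down; naive remedies are lossy, as grouping by which half of $Q_j$ the crossing lies in only gives a factor $m$, while an Erd\H{o}s--Szekeres monotone-subsequence argument on the pairs $(\alpha_j,\beta_j)$ only gives a factor $\mathrm{mim}(G[A,B])$. Obtaining the tight factor $2$ for all $m$ is exactly the technical content of Theorem~5 of \cite{jaffke2019mim}, which one may invoke directly since a linear layout is a special branch decomposition.
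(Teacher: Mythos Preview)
Your proposal is correct and ultimately takes the same route as the paper: reduce to the cut-local inequality and invoke Theorem~5 of \cite{jaffke2019mim}, observing that a linear layout is a special branch decomposition. The paper's proof is in fact nothing more than that one-line citation; your additional self-contained argument for the case $m=2$ (partitioning by the side of the midpoint) is sound and goes beyond what the paper does, but for general $m$ you end up, as the paper does, deferring to \cite{jaffke2019mim}.
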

\begin{proof}
This follows directly from \cite{jaffke2019mim}, Theorem 5, that states that the property holds for any cut of the graph. Therefore it must hold also for linear layouts.\qed
\end{proof}

\begin{theorem}
For any tree $T$ with $lmw(T) = k$, then $k \leq lmw(T^2) \leq 2k$. These bounds are tight for any $k \geq 0$.
\end{theorem}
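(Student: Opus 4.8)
The plan is to treat the three claims separately. The upper bound $lmw(T^2) \le 2k$ needs nothing new: it is exactly the preceding lemma (the special case of \cite{jaffke2019mim}, Theorem 5) with $m = 2$ applied to $G = T$.

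For the lower bound $lmw(T^2) \ge k$ I would argue by induction on $k$, proving the statement that every tree $S$ with $lmw(S) \ge k$ satisfies $lmw(S^2) \ge k$; the case $k = 0$ is vacuous since $lmw \ge 0$ always. For the inductive step, apply Lemma \ref{kneighbors} with parameter $k-1$: since $lmw(T) \ge k$, there is a node $x$ with three $(k-1)$-neighbors $v_1, v_2, v_3$, each $v_i$ having a neighbor $u_i \ne x$ whose component $T_i$ (the side of the deleted edge $v_i u_i$ containing $u_i$) satisfies $lmw(T_i) \ge k-1$. Because $T_i$ arises by deleting one edge, distances inside $T_i$ agree in $T$ and in $T_i$, so the induced subgraph $C_i := T^2[V(T_i)]$ equals $T_i^2$, and by the induction hypothesis $lmw(C_i) \ge k-1$. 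I would then feed $C_1, C_2, C_3$ into Lemma \ref{threeparts}, applied to the graph $T^2$. The pairwise-distance hypothesis holds because the closest vertices of distinct $T_i, T_j$ are $u_i, u_j$, at distance $4$ in $T$ along $u_i, v_i, x, v_j, u_j$, hence at distance $2$ in $T^2$.

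The delicate point, which I expect to be the only real obstacle in the lower bound, is supplying the connecting paths demanded by Lemma \ref{threeparts}. The naive route $u_i, x, u_j$ fails: $x$ lies at distance $2$ in $T$ from $u_\ell$, so $x$ is adjacent in $T^2$ to the third subgraph, i.e. $x \in N_{T^2}[C_\ell]$. Instead I would route through the edge $v_i v_j$ of $T^2$ (legitimate since $d_T(v_i, v_j) = 2$), taking $P_{i,j} = u_i, v_i, v_j, u_j$. Each of its four vertices is at $T$-distance at least $3$ from $V(T_\ell)$ — the distances to the closest vertex $u_\ell$ are $4, 3, 3, 4$ — so none of them lies in $N_{T^2}[C_\ell]$. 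Lemma \ref{threeparts} then gives $lmw(T^2) > k-1$, closing the induction.

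It remains to prove tightness for every $k \ge 0$, which I expect to be the technically heaviest part. I would build two recursive families, both of linear MIM-width $k$ (each a three-copies construction whose lower bound comes from Lemma \ref{kneighbors} and whose upper bound comes from the path layout lemma, Lemma \ref{pathlayout}), differing only in how far apart successive recursive pieces are placed. For the lower extreme I would space the pieces so that the optimal layout $\sigma$ of $T$ still witnesses $mw(\sigma, T^2) = k$ — so that squaring never fuses two crossing edges into an extra matching edge — giving $lmw(T^2) = k$. For the upper extreme I would pack the pieces tightly so that at each recursion level the square supports two independent crossing edges rather than one, and then show $lmw(T^2) \ge 2k$ by a recursive application of Lemma \ref{threeparts} inside $T^2$ mirroring the argument above, while $lmw(T) = k$ is retained. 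The crux here is the square-side estimates: producing an explicit good layout of $T^2$ for the lower family (Lemma \ref{pathlayout} does not apply directly, as $T^2$ is not a tree), and exhibiting induced matchings of size $2k$ for the upper family.
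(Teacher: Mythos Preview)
Your plan matches the paper's proof closely. The upper bound is Lemma~4; the lower bound is the same induction via Lemmas~\ref{kneighbors} and~\ref{threeparts}, and your explicit path $u_i, v_i, v_j, u_j$ through the $T^2$-edge $v_iv_j$ is exactly the right fix for the obstacle you flag (the paper merely asserts such a path exists). The two tightness families you sketch are also the paper's $\mathcal{L}$ and $\mathcal{H}$.

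There is, however, one genuine missing idea in your upper-tightness sketch. In the natural ``tight packing'' --- root $u$, three children $v_1,v_2,v_3$, each $v_i$ with three children that are roots of copies of $H(k)$ --- the three $H(k)$-copies hanging below a common $v_i$ have their roots at $T$-distance $2$, hence \emph{adjacent} in $T^2$, so Lemma~\ref{threeparts} cannot be applied to them: its distance-$2$ hypothesis fails. The paper resolves this by strengthening the induction hypothesis to also assert $lmw\bigl(H(k)^2 \setminus \{u_k\}\bigr) = 2k$; deleting the root keeps the square connected but pushes the relevant pieces out to $T^2$-distance $2$. This lets Lemma~\ref{threeparts} fire twice per level --- once inside each $v_i$-branch (on the root-deleted copies) to reach $2k+1$, and once across the three branches to reach $2k+2$. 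Without that strengthening, your ``recursive application of Lemma~\ref{threeparts}'' stalls at the very first step.
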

\begin{proof} To prove the first inequality, we use induction to prove
that for every tree $T$, if $lmw(T) \geq k$, then $lmw(T^2) \geq k$.

For the base case, it is trivial to see that if $T$ has linear MIM-width at least 1 (that is, if $T$ contains at least one edge), then $T^2$ must also have linear MIM-width at least 1.

For the inductive step, we fix a $k \geq 1$, and assume that for every tree $T$ with $lmw(T) \geq k$, $lmw(T^2) \geq k$. We show that for any tree $T'$ with $lmw(T') \geq k+1$, $lmw(T'^2) \geq k+1$ as follows:

Let $T$ be a tree with $lmw(T) \geq k+1$. From Lemma 2, we know that there must exist a node $x\in T$ that has at least 3 $k$-neighbors in $T$, i.e. $x$ has neighbors $v_1,v_2,v_3$ such that there are three subtrees $S_1,S_2,S_3 \subseteq T\setminus N[x]$ adjacent to $v_1$, $v_2$ and $v_3$ respectively, with $lmw(S_1),lmw(S_2),lmw(S_3) \geq k$.

By the inductive assumption, $lmw(S^2_1),lmw(S^2_2),lmw(S^2_3) \geq k$. $S^2_1$, $S^2_2$ and $S^2_3$ are three connected induced subgraphs of $T^2$, all of distance at least two from each other. Furthermore, between each two of the subtrees -- say $S^2_1$ and $S^2_2$ -- there exists a path in $T^2$ that does not intersect the closed neighborhood of the third subtree, in this case $N[S^2_3]$. By Lemma 3, $T^2$ must have linear MIM-width at least $k+1$.\\

The second inequality follows directly from Lemma 4.\\

Next, we show the downward tightness of the bound; that is, that there exists an infinite family of trees $$\mathcal{L} = (L(0),L(1),\ldots)$$ where $lmw(L(k)^2) = lmw(L(k)) = k$ for every $k\geq 0$. For ease of notation, we will define each tree in $\mathcal{L}$ as a rooted tree.

$L(0)$ is defined to be the singleton $K_1$. For every $k \geq 0$, $L(k+1)$ has a root $u$ with three children, $v_1,v_2,v_3$. Each $v_i$ in turn has one child that is the root of a copy of $L(k)$, which we call $S_i$. This recursive structure enables us to show that between any two trees $L(k)$ and $L(k+1)$, the linear MIM-width must increase with at least 1, due to Lemma 2. On the other hand, between any $L(k)^2$ and $L(k+1)^2$, the linear MIM-width must increase with \emph{at most} 1. This is shown by constructing a layout of $L(k + 1)^2$ of MIM-width $k+1$; this layout is identical to the one given in the proof of the Path Layout Lemma (see \cite{hogemo2019linear} for detatils).

We prove the following claim by induction: For any tree $L(k)\in \mathcal{L}$, $lmw(L(k)^2) = lmw(L(k)) = k$.

The base case is the trivial observation that $lmw(K^2_1) = lmw(K_1) = 0$.

For the inductive step, we assume that $lmw(L(k)^2) = lmw(L(k)) = k$ for some $k \geq 0$, and show that $lmw(L(k+1)^2) = lmw(L(k+1)) = k+1$. From the struc-ture of $L(k+1)$ and the induction hypothesis, it is evident that the root $u$ has three k-neighbors, namely all its children.

By Lemma \ref{kneighbors}, we can conclude that $lmw(L(k+1)) \geq k+1$. Regarding $L(k+1)^2$, by the induction hypothesis there exists a layout of $L(k)^2$ that has MIM-width exactly $k$. We thus have optimal layouts $\sigma_{S^2_1},\sigma_{S^2_2},\sigma_{S^2_3}$ available. We construct a layout $\sigma_{L(k+1)}$ that has MIM-width exactly $k+1$ as follows:
$$\sigma_{L(k+1)^2} = (u) \oplus \sigma_{S^2_1} \oplus (v_1) \oplus \sigma_{S^2_2} \oplus (v_2) \oplus \sigma_{S^2_3} \oplus (v_3)$$
where $\oplus$ signifies concatenation.

For any cut $(V^\sigma_i,\overline{V^\sigma_i})$, a maximum matching contains at most $k$ edges from within some $S^2_a$. How many edges from outside $S^2_a$, i.e. in the graph
$$G'_i := L(k+1)^2[V^\sigma_i,\overline{V^\sigma_i}] - E(S^2_i)$$
can be taken into a matching? We see that every vertex in $V(G'_i)\cap V^\sigma_i$ is in $\{u\}\cup\{v_1,\ldots,v_{a-1}\}\cup(S_a\cap V^\sigma_i)$, or has no neighbors. Every vertex in $S_a\cap V^\sigma_i$ only has at most $v_a$ as neighbor, $v_1,\ldots,v_{a-1}$ have $v_a,\ldots,v_3$ as neighbors, and finally $u$ has all of these and possibly also one vertex in each of $S_a,\ldots,s_3$ as neighbors. This implies that $G'_i$ is a bipartite chain and has MIM 1. Thus, no induced matching in $L(k+1)^2[V^\sigma_i,\overline{V^\sigma_i}]$ can have size more than $k+1$.

Now we have that $lmw(L(k+1)^2) \leq k+1 \leq lmw(L(k + 1))$. But, as we have proven above, $lmw(L(k+1)^2) \geq lmw(L(k+1))$. Thus, $lmw(L(k+1)^2) = lmw(L(k+1)) = k+1$, and every tree in $\mathcal{L}$ has the same linear MIM-width as its square.\\

Finally, we show the upward tightness of the bound; that is, that that there exists an infinite family of trees $\mathcal{H} = (H(0),H(1),\ldots)$ where $lmw(H(k)^2) = 2\cdot lmw(H(k)) = 2k$ for every $k\geq 0$. We will also define each tree in $\mathcal{H}$ as a rooted tree.

$H(0)$ is again the singleton $K_1 = u_0$. For every $k \geq 0$, $H(k+1)$ has a root $u_k$ with three children, $v_1,v_2,v_3$. Each $v_i$ in turn has three children, each of which is the root of a copy of $H(k)$. We call these trees $S_{i,1}$, $S_{i,2}$ and $S_{i,3}$. This recursive structure enables us to show that between any two trees $H(k)$ and $H(k+1)$, the linear MIM-width must increase with at most 1, due to Lemma \ref{pathlayout}: Taking the path to be $(u_{k+1})$, we see that all the subtrees that remain after removing the neighborhood of $u_{k+1}$ are the $S_{i,a}$ for $1\leq i,a\leq 3$. Since, by assumption, $S_{i,a}$ has linear MIM-width $k$, $lmw(H(k+1)) \leq k+1$. And in fact, it is exactly $k+1$ since $H(k)$ is a supertree of $L(k)$ for every $k$.

On the other hand, between any $H(k)^2$ and $H(k+1)^2$, the linear MIM-width must increase with at least 2. This is shown by applying Lemma \ref{threeparts} twice. To this end, we must show that the linear MIM-width of $H(k)^2$ does not decrease when removing its root $u_k$; this trick is to ensure that the subgraphs we consider are situated far enough apart that Lemma \ref{threeparts} is applicable.

We will now prove the following claim by induction: Given that $uk$ is the root of some tree $H(k)$, $lmw(H(k)^2) = lmw(H(k)^2\setminus\{u_k\}) = 2\cdot lmw(H(k)) = 2k$. Note that the graph $H(k)^2\setminus\{u_k\}$ is still a connected graph.

For the base case, it is clear that $lmw(k^2_1) =
lmw(u_0) = lmw(\emptyset) = 2\cdot lmw(K_1) = 0$.

For the induction step, we assume that $lmw(H(k)^2) = lmw(H(k)^2\setminus\{u_k\}) = 2\cdot lmw(H(k)) = 2k$ for some tree $H(k)$ with root $u_k$, and show that $lmw(H(k)^2) = lmw(H(k+1)^2\setminus\{u_{k+1}\}) = 2\cdot lmw(H(k+1)^2) = 2(k+1)$. We know from the induction hypothesis that for every $(S_{i,a})^2$, the graph $S'_{i,a} = (S_{i,a})2\setminus\{u_k\}$ is a connected subgraph of $H(k)^2$ with linear MIM-width $2k$. Thus, the three graphs $S'_{i,1}$, $S'_{i,2}$ and $S'_{i,3}$ are three connected subgraphs with pairwise distance two in the graph $T'_i= (H(k+1)[v_i])^2\setminus\{v_i\}$. Furthermore, for each pair of sub-graphs $S'_{i,a}$ and $S'_{i,b}$, there is a path between them that does not intersect the neighborhood of $S_{i,c}$ (the red path in the illustration below). By Lemma \ref{threeparts}, the linear MIM-width of every $T'_i$ is at least $2k+1$. (In the case $k = 0$, the notion of paths between $S'_{i,a}$ and $S'_{i,b}$, which are empty sets, does not really make sense. In this case, just note that $T'_i$ contains at least one edge and thus must have linear MIM-width at least 1.)

We use the same argument one more time: $T_1$, $T'_2$ and $T'_3$ are three connected subgraphs with pairwise distance two in the graph $H' := H(k+1)^2\setminus\{u_{k+1}\}$. For each pair $T'_a$ and $T'_b$, there is a path between them that does not intersect the neighborhood of the third subgraph $T'_c$ (the blue path in the illustration below). Thus, $lmw(H') \geq 2k+2$. Since $H'$ is an induced subgraph of $H(k+1)^2$, we have the situation that
$$lmw(H(k+1)^2) \geq lmw(H') \geq 2(k+1) = 2\cdot lmw(H(k+1))$$
But, as we have noted above, $lmw(H(k+1)^2) \leq 2\cdot lmw(H(k+1))$. Thus,
$$lmw(H(k+1)^2) = lmw(H(k+1)^2\setminus\{u_{k+1}\}) = 2\cdot lmw(H(k+1)) = 2(k+1)$$
and every tree in $\mathcal{H}$ has half the linear MIM-width of its square.\qed
\end{proof}

\begin{figure}[bt]
    \centering
    \caption{A subgraph of the graph $H(k+1)^2$. Dashed lines
    indicate power edges.}
    \includegraphics[width=\linewidth]{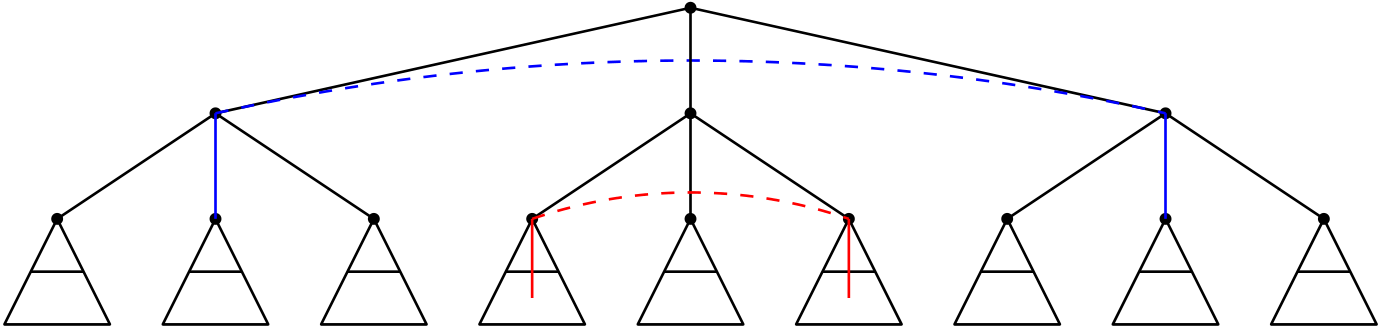}
\end{figure}

\section{Conclusion}

We have shown that there is little connection between the linear MIM-width of a tree and that of its square, except the fact that it cannot decrease. This fact is interesting, as it implies that taking the square of a tree does not makes its vertices more well-connected than in the original tree. As we know, there must, for any tree $T$ with $lmw(T)\geq 2$, be an exponent $k$ such that $lmw(T^k)<lmw(T)$, since for any finite graph $G$, $G^{diam(G)}$ is a complete graph and thus has linear MIM-width 1. How high this exponent must be to decrease the linear MIM-width (or bring it down to 1), and whether there exists a poly-time algorithm to evaluate the linear MIM-width of powers of trees, must still be left in the open.

\section{Acknowledgements}

The author would like to thank O-joung Kwon for the initial discussion of this topic.

\bibliographystyle{splncs04}
\bibliography{refs}

\end{document}